\newtheorem{theorem}{Theorem}[section]
\newtheorem{lemma}[theorem]{Lemma}
\newtheorem*{remark}{Remark}
\begin{document}

\title{Identification and Stabilization of Critical Clusters in Inverter-Based Microgrids}



\author{Andrey Gorbunov ~\IEEEmembership{Student Member,~IEEE}, Petr Vorobev~\IEEEmembership{Member,~IEEE},\\and
Jimmy Chih-Hsien Peng ~\IEEEmembership{Member,~IEEE}%
}

\markboth{IEEE TRANSACTIONS ON POWER SYSTEMS}%
{Shell \MakeLowercase{\textit{et al.}}: Bare Demo of IEEEtran.cls for Journals}

\maketitle

\begin{abstract}
A new method for stability assessment of inverter-based microgrids is presented in this paper. It leverages the notion of critical clusters -- a localized group of inverters with parameters having the highest impact on the system stability. The spectrum of the weighted network admittance matrix is proposed to decompose a system into clusters and rank them based on their distances from the stability boundary. We show that each distinct eigenvalue of this matrix is associated with one cluster, and its eigenvectors reveal a set of inverters that participate most in the corresponding cluster. The least stable or unstable clusters correspond to higher values of respective eigenvalues of the weighted admittance matrix. We also establish an upper threshold for eigenvalues that determines the stability boundary of the entire system and demonstrate that this value depends only on the grid type (i.e. $R/X$ ratio of the network) and does not depend on the grid topology. Therefore, the proposed method provides the stability certificate based on this upper threshold and identifies the lines or inverter droop settings needed to be adjusted to restore or improve the stability.
\end{abstract}

\begin{IEEEkeywords}
 droop controlled inverters, invert-based systems, microgrids, small signal stability, stability assessment.
\end{IEEEkeywords}

\IEEEpeerreviewmaketitle


\section{Introduction}

\IEEEPARstart{P}{ower} electronics interfaced generation is becoming increasingly widespread in modern power systems, mainly due to the increase in the share of renewable energy sources. Historically driven by governmental policies, this process is becoming economically feasible with the reduction of prices for semiconductor devices, and it is now being forecasted that by $2030$ up to $80\%$ of all the generated electricity in the world will go through power electronics devices \cite{DepartmentofEnergy2011}. Thus, the so-called zero-inertia power systems - AC electric systems with no synchronous machines - have attracted a lot of attention from both research and industrial communities in recent years. Grid-forming inverters - the core elements of such systems are now thought to become the main components of future power systems \cite{matevosyan2019grid, DepartmentofEnergy2011}. Unsurprisingly, inverter-based microgrids have attracted considerable attention of the research community over the last years, with studies focused on different inverter control techniques, modelling approaches, security assessment, etc. The literature on the topic is vast, and the interested reader can refer to excellent reviews \cite{olivares2014trends,parhizi2015state,guerrero2012advanced}.

Droop-controlled inverters - inverters, mimicking the behavior of synchronous machines \cite{Pogaku2007}, are supposed to become the main building blocks for zero-inertia grids. Power sharing and voltage regulation capabilities, the possibility of parallel operation, and rather simple and universal operating principles make these inverters a promising solution for future power electronics dominated grids. However, already the early experiments \cite{barklund2008energy} revealed that small-signal stability could become a major issue for such systems. Despite the seeming similarity of droop-controlled inverters to synchronous machines, the secure regions of inverter control parameters can be the very narrow, and careful tuning of droop coefficients might be necessary to guarantee the stable operation of such systems. Moreover, further research also revealed that modelling approaches that were routinely employed for conventional power systems fail to perform well for microgrids \cite{mariani2014model,Vorobev2016,Nikolakakos2016}. 

While direct detailed modeling can be used to certify stability of inverter-based microgrids \cite{naderi2019interconnected}, the computation cost of such an approach grows with the size of the system, and it becomes impractical already for moderate-size grids with several inverters. There has been a lot of research in the past few years on reduced-order models that can allow simple and reliable stability assessment. In \cite{rasheduzzaman2015reduced}, a method is proposed to compute the coefficients of a reduced-order model numerically, which is sufficiently accurate to detect possible instabilities. However, such a model provides no insight into what parameters of the system influence it's stability the most. In \cite{Nikolakakos2016,nikolakakos2017reduced}, reduced-order models of different levels of complexity were analysed and the system states, critical for stability assessment, were determined. It was shown that contrary to conventional power systems, simple approaches based on time-scale separation do not perform well for microgrids. Moreover, in \cite{Nikolakakos2016}, it was discovered that instability in inverter-based systems is mostly driven by the so-called "critical clusters" -  groups of inverters with short interconnection lines, and the heuristic method was proposed to detect these critical clusters. Later, in \cite{Vorobev2016} and \cite{Vorobev2017}, the influence of critical clusters on system stability was confirmed by analytic models, although approximate ones.

This paper aims to provide the procedure for the identification of the critical clusters, starting from the dynamic model of a system. To this end, we proposed a novel method based on the analysis of the spectrum of a specially constructed weighted network admittance matrix. Under realistic assumption of the uniform $R/X$ ratio in the network, the system can be decomposed into several modal subsystems -- \emph{clusters}, each associated with one eigenvalue of the admittance matrix. Here, the \textit{criticality} of a cluster - it's proximity to stability boundary - is directly related to the magnitude of the corresponding eigenvalue. 

The main contributions of this paper can be summarized as follows:

\begin{enumerate}

\item A method for microgrid decomposition into a set of equivalent two-bus systems - \textit{clusters} - was proposed, based on the spectrum of the system weighted admittance matrix. The method also allows to sorting those \textit{clusters} in order of their proximity to the stability boundary.

\item We show that there exists a critical value of $\mu_{cr}$ corresponding to stability boundary, and this value depends only on the grid type ($R/X$ ration of lines), and not on the grid topology. Thus, the set of all eigenvalues $\mu_i$ for a given grid can be used as a set of metrics for stability assessment.

\item We establish the dependence of the eigenvalues $\mu_i$ on microgrid parameters. We show, in particular, that the addition of any line to the system or decreasing the impedance of any of the existing lines leads to stability margin reduction.

\end{enumerate}


\section{Problem Statement} \label{sec:problem_statement}
The concept of critical clusters and its applications will be illustrated using a four-inverter system consisting of two areas. Referring to Fig. \ref{fig:kundur_mmg}, each area has two inverter-based generating units. This configuration is similar to the two-area four-generator Kundur's test system in \cite{kundur1994power}. In essence, interconnected inverters exhibit the same type of power swing dynamics as synchronous generators -- groups of inverters oscillate against each other. Hence, the motivation for selecting this test system is to demonstrate that for inverter-based microgrids, local clusters/modes (associated either with Area I or II) are typically less damped than inter-area clusters/modes. This is contrary to conventional power systems, where the low-frequency inter-area modes are typically the least damped ones, whereas the local modes are usually rather well-damped \cite{rogers2012power}. 

\begin{figure}[ht]
    \centering
    \includegraphics[width=0.35\textwidth]{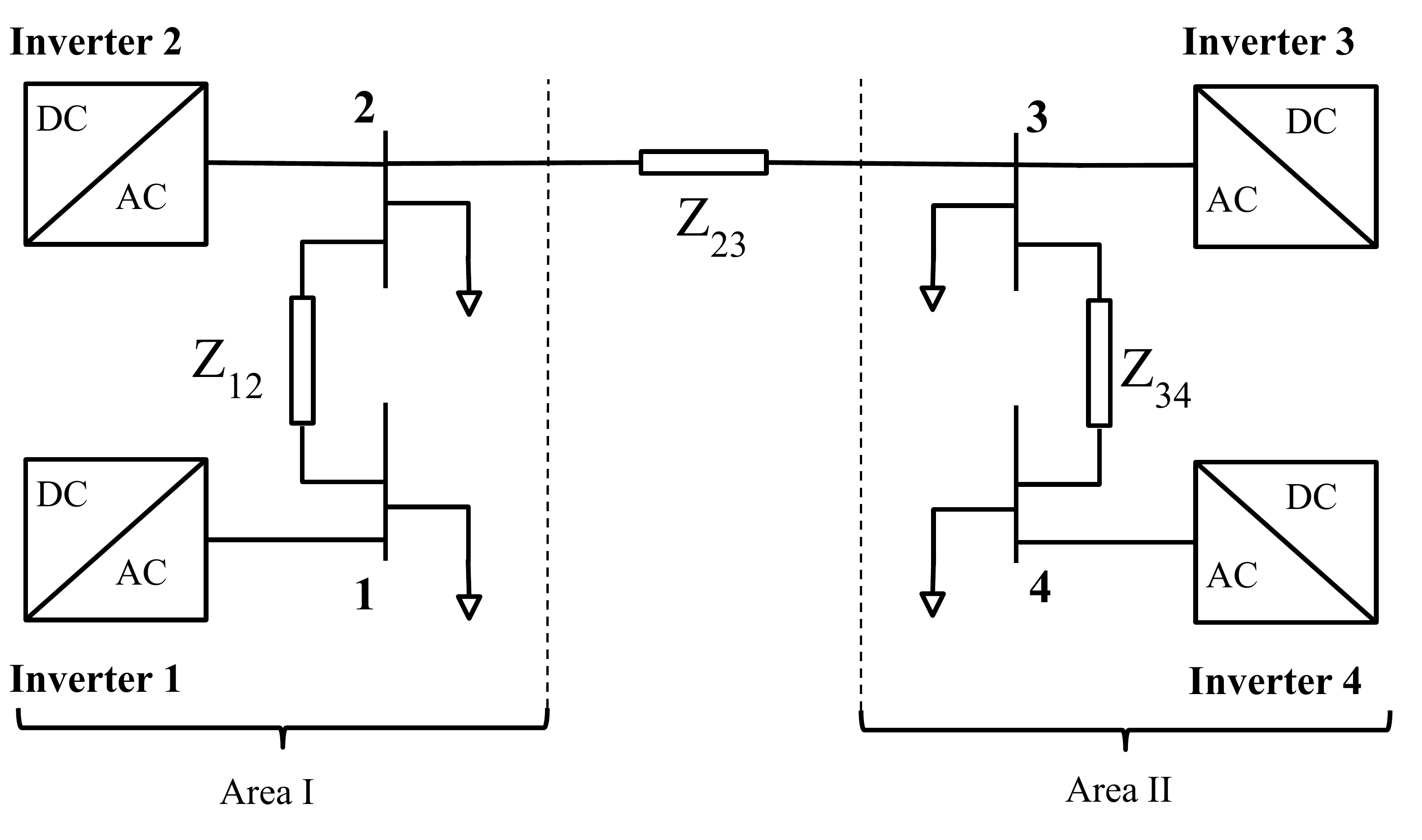}
    \caption{The two-area microgrid consisting of four inverters, operating in grid-forming mode (Network based on the $4$-bus model from \cite{kundur1994power}).}
    \label{fig:kundur_mmg}
\end{figure}

Here, we assume that all inverters are operating in the grid-forming mode, with droop control settings for both frequency and voltage. Details of this dynamic model are presented in Section \ref{sec:modeling}.

To demonstrate the property of clustering, the impedance of line $2-3$ is assumed to be significantly larger than those inside each area, i.e. $Z_{23}$ $>>$ $Z_{12}, Z_{34}$ (see Table \ref{tab:parameters}). Thereby, the network is distinctly separated into two areas, and the area that dominantly influences its stability can be subsequently determined. Fig. \ref{fig:kundur_eigenplot} illustrates the eigenvalues associated with the dominant modes of the system. It is observed that there are three dominant modes (pole-pairs), which correspond to oscillations between four inverters. Fig. \ref{fig:mode_shapes} shows the respective mode shapes (i.e. corresponding eigenvectors) -- the bigger the value of a mode shape element, the higher the magnitude of oscillations of the corresponding inverter compared to the others.
Here, the unstable mode shapes (red) are mainly associated with Area II. This is in agreement with the time-domain simulations presented in Fig. \ref{fig:step_response_load_change_m=3}, where all the inverters exhibit the growing amplitudes of frequency oscillations, but the amplitudes are significantly bigger for inverters $3$ and $4$. This observation motivates the investigation of the existence and identification of the critical clusters among inverter-based systems. 

\begin{figure}
    \centering
    \includegraphics[width=0.35\textwidth]{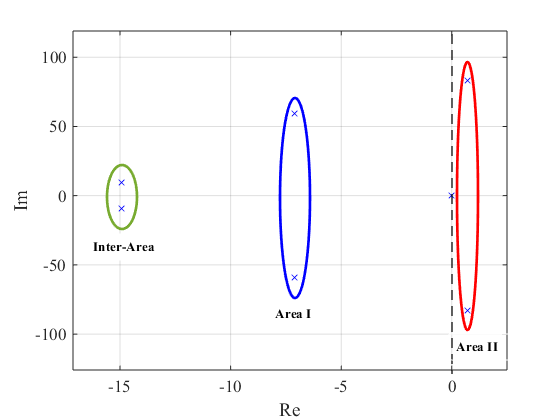}
    \caption{Low-frequency modes for the system of Fig.\ref{fig:kundur_mmg}, showing the local modes in Area I and Area II as well as the inter-area mode between both areas. Note that the pair of poles corresponding to unstable mode (i.e. critical mode) is in the right half-plane.}
    \label{fig:kundur_eigenplot}
\end{figure}

\begin{figure}
    \centering
    \includegraphics[width = 0.37\textwidth]{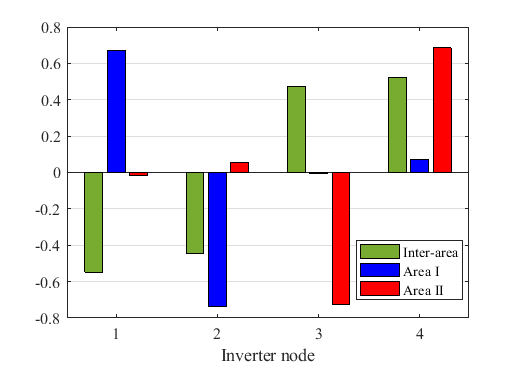}
    \caption{Mode shapes (eigenvectors) for the two-area system. Note that a higher absolute value indicates a larger oscillatory magnitude.}
    \label{fig:mode_shapes}
\end{figure}

\begin{figure}
    \centering
    \includegraphics[width=0.35\textwidth]{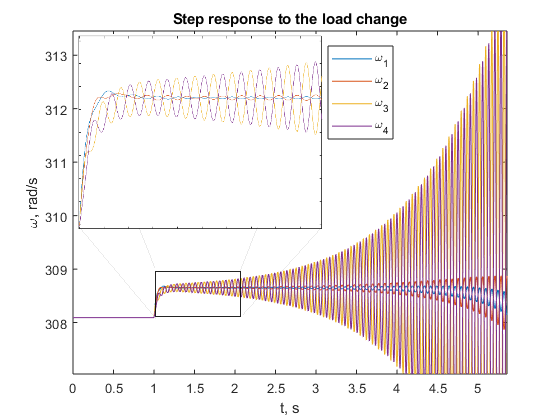}
    \caption{Dynamics of the system following a step load change ($10\%$). Both areas have unstable oscillations, but the amplitudes of oscillations are significantly higher in Area II.}
    \label{fig:step_response_load_change_m=3}
\end{figure}


\section{Modeling Approach} \label{sec:modeling}
In this section, we first recapitulate the dynamic model of inverter-based microgrids that is used for stability analysis. We adopted the $5$-th order electromagnetic model \cite{Vorobev2016}, which consists of states associated with dynamics of each inverter (angles $\theta_i$, frequency $\omega_i$, and voltage $V_i$) and each line (current components in \textit{dq} frame $I^{ij}_d$ and $I^{ij}_q$). Note that power, voltage, current, and impedances are given in per-unit. 
Denoting a set of lines as $\mathcal{E}$, a set of buses with inverters as $\mathcal{V}_{inv}$, and a set of all nodes as $\mathcal{V}$, the dynamic model constitutes of the following equations: 

\begin{subequations} \label{eq:dyn_nonlinear}
    \begin{align}
        &\dot{\theta_i} = \omega_i - \omega_0 \ ,\\
        &\tau \dot{\omega_i} = \omega_{set} - \omega_i - \omega_0 m_i P_i \ ,\\
        &\tau \dot{V_i} = V_{set} - V_i - n_i Q_i\ , \\
        &L^{ij} \dot{I^{ij}_d} = V_i \cos{\theta_i}-V_j \cos{\theta_j} - R^{ij} I^{ij}_d + \omega_0 L^{ij} I^{ij}_q \ ,\\
        &L^{ij} \dot{I^{ij}_q} = V_i \sin{\theta_i}-V_j \sin{\theta_j} - R^{ij} I^{ij}_q - \omega_0 L^{ij} I^{ij}_d.
    \end{align}
\end{subequations}
Where $\tau = \frac{1}{\omega_c}$ is the time constant of the power measurement low-pass filter \cite{Pogaku2007} with a cut-off frequency of $\omega_c$; $L^{ij}$ and $R^{ij}$ are the inductance and the resistance of the line connecting node $i$ and $j$, respectively. For brevity, the subscript of a variable represents the node index (e.g., $\theta_i, \ i \in \mathcal{V}_{inv}$), and the superscript refers to those of an edge (e.g., $I_q^{ij}, \ (ij) \in \mathcal{E}$).

Next, the linear approximation of \eqref{eq:dyn_nonlinear} is formulated. The state variables are derived from the stationary operating point, such that $\delta \theta_i = \theta_i^{\text{actual}} - \theta_i^0$, where $\theta_i^{\text{actual}}$ is the actual angle value, and $\theta_i^0$ is the operational value. Here, we use the fact that angular differences between inverters are typically small in inverter-based microgrids. Hence, the operating values can be assumed as $\theta_i^0 \approx 0 \ \forall i\in \mathcal{V}_{inv}$ and $V_i^0 \approx V_{set} = 1 \ p.u.$ \cite{mariani2014model,Vorobev2016}:

\begin{subequations} \label{eq:dyn_model}
    \begin{align}
        &\dot{\delta \theta_i} = \delta \omega_i \ , \\
        &\tau \dot{\delta \omega_i} = - \omega_i - \omega_0 m_i \delta P_i \  ,\\
        &\tau \dot{\delta V_i} = -\delta V_i - n_i \delta Q_i\ , \\
        &L^{ij} \dot{\delta I^{ij}_d} = \delta V_i - \delta V_j - R^{ij} \delta I^{ij}_d + \omega_0 L^{ij} \delta I^{ij}_q \ \label{eq:dyn_Id},\\
        &L^{ij} \dot{\delta I^{ij}_q} = \delta \theta_i - \delta \theta_j - R^{ij} \delta I^{ij}_q - \omega_0 L^{ij} \delta I^{ij}_d \label{eq:dyn_Iq}  \  .
    \end{align}
\end{subequations}
To simplify the notations, the state variables that are associated with deviations, e.g. $\delta \theta_i, \delta V_i$, etc., will be denoted merely as $\theta_i, V_i, \cdots$ from the hereafter. 

Given an arbitrary network, the number of lines can be greater than the number of buses, i.e. $ v \overset{\Delta}{=} |\mathcal{V}_{inv}| <  l \overset{\Delta}{=}|\mathcal{E}|$. For example, in a system with four buses ($v = 4$), it is possible to have a maximum of six lines ($l = \frac{v(v-1)}{2} = 6$). Accordingly, the number of equations in \eqref{eq:dyn_model} can become big even for moderate-size grids. In classic modeling, the use of nodal currents instead of line currents helps to reduce the number of variables. Subsequently, nodal currents are determined from nodal voltages through the admittance matrix (i.e. by a set of algebraic relations) $\pmb{I} = Y \pmb{V}$. However, such approach is possible in conventional power systems because line dynamics are usually neglected ($\dot{I_d}, \dot{I_q} \approx 0$). In inverter-based microgrids, explicit accounting for line dynamics is crucial for small-signal stability analysis, and derivative terms in equations \eqref{eq:dyn_Id} and \eqref{eq:dyn_Iq} can not be neglected \cite{Vorobev2016,Nikolakakos2016}. Nevertheless, a nodal representation of \eqref{eq:dyn_model} is still possible under certain modest assumptions \cite{tucci2016plug}, \cite{caliskan2012kron}, and will be discussed in the following subsection.
 
\subsection{Nodal representation of the network dynamics}
 
Suppose the incidence matrix of the power network is $\nabla^T$, which size is $|\mathcal{V}|\times |\mathcal{E}|$, such that $\nabla^T_{ij} = -1$ if the $j$th line leaves the $i$th node, or $\nabla^T_{ij} = 1$ if the $j$th line enters the $i$th node. The electromagnetic dynamics of the lines can then be represented by the following vector equations:
\begin{subequations} \label{eq:Ohms_law}
    \begin{align}
        &\pmb{\dot{\mathcal{I}}_d} = \omega_0 X^{-1}\nabla \pmb{V} - \omega_0 \mathcal{P} \pmb{\mathcal{I}_d} + \omega_0 \pmb{\mathcal{I}_q} \ ,\\
        &\pmb{\dot{\mathcal{I}}_q} = \omega_0 X^{-1}\nabla \pmb{\theta} - \omega_0 \mathcal{P}  \pmb{\mathcal{I}_q} - \omega_0 \pmb{\mathcal{I}_d}, \ .
    \end{align}
\end{subequations}
where $X = \text{diag}(\cdots,X^{ij},\cdots)$ is a matrix with line reactances on its diagonal; $\mathcal{P} =\text{ diag}(\cdots,\rho^{ij},\cdots)$ is a matrix with $R/X$ ratios on its diagonal; $\pmb{V}$ and $\pmb{\theta}$ are vectors consisting of voltages and phase angles at each bus; $\pmb{\mathcal{I}_d}$ and $\pmb{\mathcal{I}_q}$ are vectors of currents associated with each line. One should notice that the vectors $\pmb{V}, \pmb{\theta}$ and vectors $\pmb{\mathcal{I}_d}, \pmb{\mathcal{I}_q}$ have different dimensions: $|\mathcal{V}|$ and $|\mathcal{E}|$ accordingly. Let us multiply \eqref{eq:Ohms_law} by $\nabla^T$, and introduce the denotations $\pmb{I_d} = \nabla^T \pmb{\mathcal{I}_d}, \ \pmb{I_q} = \nabla^T \pmb{\mathcal{I}_q}$ for nodal currents, such that: 
\begin{subequations} \label{eq:EM_dynamics}
    \begin{align}
        &\dot{\pmb{I_d}} = \omega_0 \nabla^T X^{-1}\nabla \pmb{V} - \omega_0 \nabla^T \mathcal{P}  \pmb{\mathcal{I}_d} + \omega_0 \pmb{I_q} \ ,\\
        &\pmb{\dot{I_q}} = \omega_0 \nabla^T X^{-1}\nabla \pmb{\theta} - \omega_0 \nabla^T \mathcal{P}  \pmb{\mathcal{I}_q} - \omega_0 \pmb{I_d}, \ .
    \end{align}
\end{subequations}

 Equations \eqref{eq:EM_dynamics} are almost in the desired form with the vectors $\pmb{V}$, $\pmb{\theta}$ and $\pmb{I_d}, \pmb{I_q}$ being nodal, except for the presence of the term $\nabla^T \mathcal{P}  \pmb{\mathcal{I}}_d$. The remarkable (and practically important) case is when all the lines have the same R/X ratio, $\mathcal{P} = \rho \mathbf{1}$ (homogeneous $\rho$). In this case, $\nabla^T \mathcal{P}  \pmb{\mathcal{I}}_d = \nabla^T \rho \pmb{\mathcal{I}}_d = \rho \pmb{I_d}$ and the nodal representation is \cite{tucci2016plug}, \cite{caliskan2012kron}:

\begin{equation}\label{eq:EM_dynamics_homogen}
    \tau_0(\dot{\pmb{I_d}}+ j \dot{\pmb{I_d}}) = (1+\rho^2) B (\pmb{V} + j \pmb{\theta}) - (\rho + j) (\pmb{I_d}+ j \pmb{I_d}) \ ,
\end{equation}
where $B = -\Im{Y}$ is a susceptance matrix and $\tau_0 = \frac{1}{\omega_0}$ .

Apart from reducing the number of equations in \eqref{eq:Ohms_law} for a system with more lines than nodes, another advantage of \eqref{eq:EM_dynamics_homogen} is the fact, that Kron reduction procedure can be performed over it to additionally eliminate virtual nodes. Specifically, to eliminate a virtual node with zero net current injection $[I_d + j I_q]_i = 0$, one should apply Kron reduction to $B$ and use \eqref{eq:EM_dynamics_homogen} with the reduced $B$.

\subsection{The state-space model with homogeneous $\rho$} \label{subsec:same_rx}

To represent the dynamic model in the state-space form the real and reactive powers of an inverter are expressed in terms of inverter's voltage and current. Again, assuming $V_i^0 \approx  1 \ p.u.$ and $\theta_i^0 \approx 0$, the relationship between nodal power injections $\mathbf{P}, \mathbf{Q}$ and nodal currents $\pmb{I}_d, \pmb{I}_q$ can be expressed as follows:

\begin{equation}
    \begin{pmatrix}
        \mathbf{P} \\
        \mathbf{Q} 
    \end{pmatrix} = 
    \begin{bmatrix}
        \mathbf{1} & 0\\
        0 & -\mathbf{1}\\
    \end{bmatrix}
    \begin{pmatrix}
        \pmb{I}_d\\
        \pmb{I}_q
    \end{pmatrix} ,
\end{equation} 
Subsequently, using \eqref{eq:EM_dynamics_homogen}, the following state-space representation of the electromagnetic (EM) $5^{th}$ order model of a microgrid with the homogeneous $\rho$ is obtained:

\begin{equation} \label{eq:5th_model}
    \dot{
    \begin{pmatrix}
        \pmb{\theta}\\
        \tau\pmb{\omega}\\
        \tau\pmb{V}\\
        \tau_0\pmb{I}_d\\
        \tau_0\pmb{I}_q
    \end{pmatrix}} =
    \begin{bmatrix}
        0 & \mathbf{1} & 0 & 0 & 0 \\
        0 & -\mathbf{1} & 0 & -\omega_0 M & 0 \\
        0 & 0 & -\mathbf{1} & 0 & N\\
        0 & 0 & \mathcal{B} & -\rho \mathbf{1} & \mathbf{1} \\
        \mathcal{B} & 0 & 0 & -\mathbf{1} & -\rho \mathbf{1}
    \end{bmatrix}
    \begin{pmatrix}
        \pmb{\theta}\\
        \pmb{\omega}\\
        \pmb{V}\\
        \pmb{I}_d\\
        \pmb{I}_q
    \end{pmatrix} \ ,
\end{equation}
where $M = \text{diag}(m_1, \cdots, m_v), N = \text{diag}(n_1, \cdots, n_v)$ are diagonal matrices of droop gains, and $\mathcal{B} = (1+\rho^2) B$. 


\section{Eigenmodes decomposition theory}\label{sec:eigen}

The model in \eqref{eq:5th_model} can be written in a state-space form, $\dot{\pmb{x}}= A\pmb x$, such that stability analysis of the system \eqref{eq:dyn_model} can be evaluated by eigenvalue analysis of the matrix $A$. In this case, the state matrix in \eqref{eq:5th_model} has a specific structure -- a five by five block matrix where each sub-matrix is symmetric, and this property can be exploited further. We start by formulating the following Lemma:

\begin{lemma}
    The eigenvalues $\lambda$ of the state-space model \eqref{eq:5th_model} coincide with the eigenvalues of the following polynomial eigenvalue problem.
    \begin{equation} \label{eq:dyn_polynomial}
    \{f(\lambda) \tau M^{-1} + g(\lambda) (\mathcal{B} + \tau \lambda \mathcal{B} N M^{-1}) + \mathcal{B} N \mathcal{B}\} \pmb{\psi} = 0\ ,
    \end{equation}
    where $f(\lambda) = \lambda g^2(\lambda)[h^2(\lambda) + 1]$, $g(\lambda) = (1 + \tau \lambda)$, $h(\lambda) = (\rho + \tau_0 \lambda)$.
\end{lemma}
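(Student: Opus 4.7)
The plan is to treat the eigenvalue problem for \eqref{eq:5th_model} as a generalized pencil $E\lambda\mathbf{x} = A\mathbf{x}$, where $E = \text{diag}(\mathbf{1}, \tau\mathbf{1}, \tau\mathbf{1}, \tau_0\mathbf{1}, \tau_0\mathbf{1})$ absorbs the time constants on the left-hand side, and $\mathbf{x} = (\pmb\theta, \pmb\omega, \pmb V, \pmb I_d, \pmb I_q)^T$. I would then perform a block Gaussian elimination, successively expressing four of the five block components as scalar-polynomial functions (in $\lambda$) applied to a single chosen component, which I take to be $\pmb\psi = \pmb\theta$. Matching the resulting single vector equation against the stated operator would verify the claim, with $f, g, h$ emerging naturally as the polynomials that accumulate during elimination.

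Concretely, I would proceed in this order. Row 1 gives $\pmb\omega = \lambda\pmb\theta$ for free. Row 2 rearranges to $g(\lambda)\pmb\omega = -\omega_0 M \pmb I_d$, and combined with the previous relation yields $\pmb I_d$ as an explicit scalar multiple of $M^{-1}\pmb\theta$ (roughly $-\omega_0^{-1}\lambda g(\lambda) M^{-1}\pmb\theta$). Rows 4 and 5 form a $2\times 2$ coupled system for $(\pmb I_d,\pmb I_q)$ driven by $\mathcal B\pmb V$ and $\mathcal B\pmb\theta$; eliminating $\pmb I_q$ between them produces the identity $\bigl(h^2(\lambda)+1\bigr)\pmb I_d - h(\lambda)\mathcal B\pmb V = \mathcal B\pmb\theta$. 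Row 3 reads $g(\lambda)\pmb V = N\pmb I_q$; multiplying row 5 on the left by $N$ and substituting this in gives a second clean relation $h(\lambda)g(\lambda)\pmb V = N\mathcal B\pmb\theta - N\pmb I_d$. Multiplying the first of these two by $g(\lambda)$ and the second by $\mathcal B$ puts both on a common $h(\lambda)g(\lambda)\mathcal B\pmb V$ footing, so $\pmb V$ cancels and I am left with an equation purely in $\pmb I_d$ and $\pmb\theta$. Substituting the explicit $\pmb I_d(\pmb\theta)$ and grouping coefficients of $M^{-1}$, $\mathcal B$, $\mathcal B N M^{-1}$, $\mathcal B N\mathcal B$ should reproduce exactly the operator in \eqref{eq:dyn_polynomial}, with $\lambda g^2(\lambda)(h^2(\lambda)+1)$ collecting into $f(\lambda)$.

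The main obstacle, and the only subtle point, is the converse direction: showing that every solution $(\lambda,\pmb\psi)$ of the polynomial problem corresponds to a genuine eigenpair of $A$, so that the two spectra truly \emph{coincide} rather than just one being contained in the other. This requires that the elimination steps be reversible, i.e.\ that given $\pmb\psi$ one can reconstruct a nontrivial quintuple $(\pmb\theta,\pmb\omega,\pmb V,\pmb I_d,\pmb I_q)$. The formulas above do this explicitly whenever $g(\lambda)\neq 0$ and the $(\pmb I_d,\pmb I_q)$ block matrix with polynomial entry $h(\lambda)$ is nonsingular; the isolated values $\lambda = -1/\tau$ or $1+h^2(\lambda)=0$ need a direct check, either by verifying they cannot be roots of the pencil in \eqref{eq:dyn_polynomial} given the generic positive definiteness of $M, N, \mathcal B$, or by a limiting argument. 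Once this bookkeeping is in place, the algebraic derivation of \eqref{eq:dyn_polynomial} is routine.
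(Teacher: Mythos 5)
Your proposal follows essentially the same route as the paper's proof: pass to the Laplace/eigenvalue domain, eliminate $\pmb{I}_d,\pmb{I}_q$ and then $\pmb{V}$ by block elimination, and read off the resulting degree-five matrix polynomial acting on $\pmb{\theta}$. You are in fact more careful than the paper about the converse direction --- the degenerate values $g(\lambda)=0$ and $h^2(\lambda)+1=0$ at which the elimination is not invertible --- a point the paper's proof silently glosses over.
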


\begin{proof}
  For the purposes of this proof we represent the dynamic model \eqref{eq:5th_model} in the Laplace domain $[s\mathbf{1} - A] \pmb{x} = 0$. After some transformations one can exclude $\pmb{I_d}, \pmb{I_q}$ from \eqref{eq:5th_model} representing the dynamics in terms of $\pmb{V}$ and $\pmb{\theta}$ only such that:
  \begin{equation} \label{eq:dyn_thetaV}
    \begin{aligned}
    g(s)
    \begin{bmatrix}
        -h(s)\mathbf{1} & -\mathbf{1} \\
        -\mathbf{1} & h(s)\mathbf{1}
    \end{bmatrix}
    \begin{bmatrix}
        \tau s M^{-1} & 0 \\
        0 & N^{-1}
    \end{bmatrix}
    \begin{pmatrix}
        \pmb{\theta} \\
        \pmb{V}
    \end{pmatrix} = \\
    \begin{bmatrix}
        0 & \mathcal{B} \\
        \mathcal{B} & 0
    \end{bmatrix}
    \begin{pmatrix}
        \pmb{\theta} \\
        \pmb{V}
    \end{pmatrix}
\end{aligned} \ .
\end{equation}
Subsequently, $\pmb{V}$ can also be excluded from \eqref{eq:dyn_thetaV}, accordingly:
\begin{eqnarray} \label{eq:dyn_polynomial0}
    \{ -g(s)h(s)\tau sM^{-1} \nonumber \\ 
    - [g(s)N^{-1} + \mathcal{B}] \frac{N}{g(s)h(s)}[g(s)\tau sM^{-1} + \mathcal{B}] \}\pmb{\theta} = 0 \ .
\end{eqnarray}
And finally, the desired representation \eqref{eq:dyn_polynomial} is obtained by multiplying \eqref{eq:dyn_polynomial0} by $g(s)h(s)$:
\begin{equation} \label{eq:dyn_polynomial1}
    \{f(s) \tau M^{-1} + g(s) (\mathcal{B} + \tau s \mathcal{B} N M^{-1}) + \mathcal{B} N \mathcal{B}\} \pmb{\theta} = 0\ .
\end{equation}
The matrix in \eqref{eq:dyn_polynomial1} has zero determinant whenever $s = \lambda$, where $\lambda$ is the solution to the stated polynomial eigenvalue problem \eqref{eq:dyn_polynomial}.
\end{proof}

There is a remarkable case when the active and reactive power droop coefficients ratio is uniform across the system, i.e. in matrix terms $M = k N$, as it is suggested in \cite{Schiffer2016}. With this representation, the following theorem can be formulated, which is the base for all the manuscript results: 

\begin{theorem}
    The eigenvalues $\lambda$ of \eqref{eq:5th_model} under proportional droops $M = kN$ are connected with the eigenvalues $\{\mu_i, \ i = 1,\cdots, v\}$ of the weighted network susceptance matrix $C = M\mathcal{B}$ as follows:
    \begin{equation} \label{eq:two_bus_polynomial}
        \tau kf(\lambda) + g(\lambda)(k+\tau \lambda) \mu_i + \mu_i^2 = 0.
    \end{equation}
    
\end{theorem}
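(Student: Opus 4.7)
The plan is to take the polynomial eigenvalue problem from Lemma~1 and collapse it, under the proportional-droop assumption $M = kN$, into a matrix polynomial in the single operator $C = M\mathcal{B}$. Starting from
\begin{equation*}
\{f(\lambda)\tau M^{-1} + g(\lambda)(\mathcal{B} + \tau\lambda \mathcal{B} N M^{-1}) + \mathcal{B}N\mathcal{B}\}\pmb{\psi}=0,
\end{equation*}
I would first use $M = kN$ to simplify the two places where $M^{-1}$ appears. Since $M$ and $N$ are diagonal, this yields $M^{-1} = \tfrac{1}{k}N^{-1}$ and, crucially, $NM^{-1} = \tfrac{1}{k}\mathbf{1}$, so the middle bracket becomes a scalar multiple of $\mathcal{B}$ and the whole equation now involves only $N^{-1}$, $\mathcal{B}$, and $N\mathcal{B}$.

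Next I would left-multiply by $M = kN$. The first term $f(\lambda)\tau M^{-1}$ becomes a clean scalar times the identity, while the other two terms turn into $(k+\tau\lambda)g(\lambda)\, N\mathcal{B}$ and $k\,N\mathcal{B}N\mathcal{B}$ respectively. Since $C = M\mathcal{B} = kN\mathcal{B}$, I can substitute $N\mathcal{B} = C/k$ throughout, and after multiplying through by $k$ the equation takes the form
\begin{equation*}
\{\tau k f(\lambda)\mathbf{1} + g(\lambda)(k+\tau\lambda)\, C + C^{2}\}\pmb{\psi}=0,
\end{equation*}
which is a matrix polynomial of degree two in the single matrix $C$.

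Finally, I would argue that $C=M\mathcal{B}$ is diagonalizable with real eigenvalues $\mu_i$: although $C$ itself is not symmetric, it is similar to the symmetric matrix $M^{1/2}\mathcal{B}M^{1/2}$ via the congruence $C = M^{1/2}(M^{1/2}\mathcal{B}M^{1/2})M^{-1/2}$, so it admits a complete eigenbasis $\{\pmb{\psi}_i\}$ with $C\pmb{\psi}_i = \mu_i\pmb{\psi}_i$. Testing the matrix polynomial on each such eigenvector collapses it to the scalar condition $\tau k f(\lambda)+g(\lambda)(k+\tau\lambda)\mu_i+\mu_i^{2}=0$, which is precisely \eqref{eq:two_bus_polynomial}.

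The calculation itself is routine once the substitution order is right; the only real subtlety I expect is the diagonalizability claim for $C=M\mathcal{B}$, which must be invoked to pass from the matrix polynomial to the scalar one. After that, counting arguments ensure that every root $\lambda$ obtained across $i=1,\dots,v$ accounts for all eigenvalues of \eqref{eq:5th_model}, matching the degree of the original characteristic polynomial.
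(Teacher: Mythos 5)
Your proposal follows essentially the same route as the paper: substitute $M=kN$ into the polynomial eigenvalue problem of Lemma~1, left-multiply to obtain the matrix polynomial $[\tau k f(\lambda)\mathbf{1} + g(\lambda)(k+\tau\lambda)C + C^2]\pmb{\psi}=0$ in $C=M\mathcal{B}$, and evaluate it on the eigenvectors of $C$ to collapse to the scalar relation \eqref{eq:two_bus_polynomial}. Your algebra is correct, and you additionally make explicit the diagonalizability of $C$ (via similarity to $M^{1/2}\mathcal{B}M^{1/2}$), a point the paper's proof leaves implicit and only justifies later in Section~V; this is a welcome refinement but not a different argument.
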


\begin{proof}
    
Taking into account the assumption $M  = kN$, the dynamic model \eqref{eq:dyn_polynomial} is equivalent to the following matrix polynomial of the matrix $C = M\mathcal{B}$:

\begin{equation} \label{eq:sym_polynomial}
    [\tau k f(\lambda) \mathbf{1} + g(\lambda) (k + \tau \lambda) C + C^2] \pmb{\psi} = 0 \ .
\end{equation}

It is noteworthy that eigenvectors of $C$ such that $C \pmb{\psi}_i = \mu_i \pmb{\psi}_i$ are also the eigenvectors of \eqref{eq:sym_polynomial} as $\pmb{\psi}_i$ is the eigenvector for other matrix terms in \eqref{eq:sym_polynomial}. Namely: $\mathbf{1}\pmb{\psi}_i = \pmb{\psi}_i$ and $C^2 \pmb{\psi}_i = \mu_i^2 \pmb{\psi}_i$. Using this fact, we obtain the connection between $\lambda$ and $\mu_i$ as in \eqref{eq:two_bus_polynomial}.

\end{proof}

 Thereby, the stability analysis of the whole system reduces to the analysis of polynomials \eqref{eq:two_bus_polynomial} for all values of $\mu_i$. Each polynomial in \eqref{eq:two_bus_polynomial} (i.e. polynomial with a specific $\mu_i$), in fact, corresponds to five distinct modes of the initial system and can be thought of as a representation of an equivalent two-bus system - a \emph{cluster}. Thus, the initial microgrid can be effectively split into separate clusters utilizing the spectrum of the weighted network susceptance matrix $C = M\mathcal{B}$. Fig. \ref{fig:kundur_decoupling} shows an example of such a split for the Kundur's system from Fig. \ref{fig:kundur_mmg}.  

Each cluster in Fig. \ref{fig:kundur_decoupling} represents a system, which is fully equivalent (in a small-signal sense) to an inverter versus an infinite bus system with the equivalent parameters: $m^{eq} = kn^{eq}$, $R^{eq} = \rho X^{eq}$ and $\frac{m^{eq}}{X^{eq}} = \mu_i$. Thus, the stability assessment of the initial grid is reduced to the stability assessment of a set of equivalent two-bus systems, for which any known rules and methods can be used. 

\begin{remark} \label{remark}
Each $\mu_i$ corresponds to five eigenvalues $\lambda$ that are the roots of \eqref{eq:two_bus_polynomial}. For example, five $\lambda$ associated with Area II, red dots in Fig. \ref{fig:kundur_eig_large_scale} (that is zoom-out version of Fig. \ref{fig:kundur_eigenplot}, where two unstable eigenvalues out of five are depicted), are the roots of \eqref{eq:two_bus_polynomial} with the highest $\mu = 2.06$. Hereby, the system \eqref{eq:5th_model} of $v$ inverters decouples into $v$ separate clusters each corresponding to one $\mu_i, \ i=1,\cdots,v$.
\end{remark}

It is important to note that all the roots of the equation \eqref{eq:two_bus_polynomial} have the negative real part if the corresponding $\mu_i$ is less than a certain value $\mu_{cr}$. If all $\mu_i < \mu_{cr}, \ i=1,\cdots,v$, then the system is stable, and vice versa -- if there some $\mu_i > \mu_{cr}, \ i=1,\cdots,u$, then the system has $u$ unstable modes. Hence, $\mu_i$ are the key values for stability assessment of the system, so that parameters of the system that affect the corresponding $\mu_i$ most also are the ones that should be changed to gain stability margin.

The critical value $\mu_{cr}$ for a given system depends on the network $\rho=R/X$ ratio (but not on the network topology) and the ratio of droop coefficients $k$. It can be found numerically for every pair of $\rho$ and $k$. Thus, one can say that a certain value of $\mu_{cr}$ corresponds to a class of networks. Plot of $\mu_{cr}$ for various $\rho$ and $k$ values can be found in our preliminary conference paper on the topic \cite{gorbunov2020identification}. With some degree of conservativeness, critical values $\mu_{cr}$ can be found using known results for two-bus systems. For example, using conservative stability from \cite[eq.~(17)]{vorobev2018towards}, the following analytic formulas for the lower bound of $\mu_{cr}$ can be written:

\begin{equation}\label{eq:mu_conservative}
    \mu_{cr} \geq \mu_{cr}^{lb} \overset{\Delta}{=} 
    \begin{cases}
    \frac{(\rho^2+1)^2}{4\rho} , \ k > \frac{1}{2} \rho(\rho^2 + 1)\\
    k \frac{\rho^2 + 1}{2\rho^2} , \ k \leq \frac{1}{2} \rho(\rho^2 + 1)\ .
    \end{cases}
\end{equation}

The next section discusses how the spectrum of $C$, $\mu_i$, depends on the system parameters. 

\begin{figure}
    \centering
    \includegraphics[width = 0.3\textwidth]{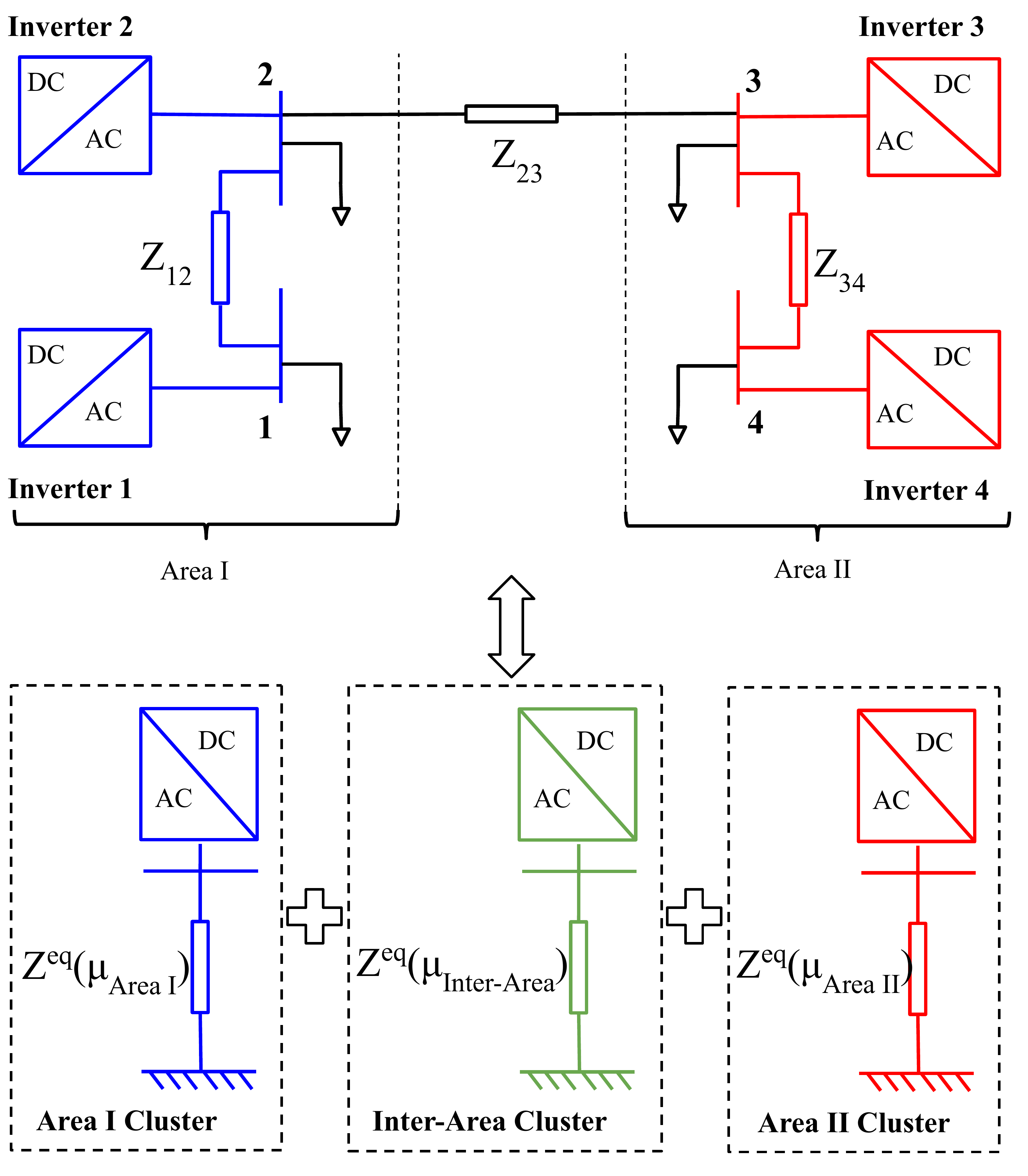}
    \caption{Illustration of decomposing a larger system into a set of clusters - equivalent two-bus systems.}
    \label{fig:kundur_decoupling}
\end{figure}

\begin{figure}
    \centering
    \includegraphics[width=0.4\textwidth]{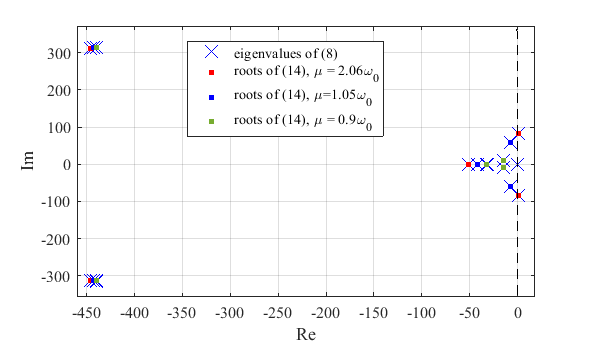}
    \caption{Pole-zero plot of the two-area system, including local modes, inter-area modes, and electromagnetic modes, with roots of \eqref{eq:two_bus_polynomial} associated with each cluster.}
    \label{fig:kundur_eig_large_scale}
\end{figure}

\subsection{Procedure for identifying critical clusters} \label{sec:identification_procedure}
 
As was established in the previous subsection, the stability of a microgrid can be evaluated by analysing the spectrum of its generalized Laplacian matrix $C=M\mathcal{B}$. Recall that \eqref{eq:two_bus_polynomial} only depends on the system $\rho=R/X$ ratio and the droop gain ratio $k$, but not on the network, line lengths, or values of droop gains. Therefore, one can say that $\mu_{cr}$ can be calculated for a class of networks simultaneously. 

Leveraging on these properties, the procedure for the stability assessment (Fig. \ref{fig:critical_cluster_flowchart}) is formulated as follows. First, determine $\mu_{cr}$ from \eqref{eq:two_bus_polynomial} provided the values of $\rho$ and $k$ 
 are given for the system. This can be done either directly (numerically) from \eqref{eq:two_bus_polynomial}, or using some conservative estimations, like \eqref{eq:mu_conservative}. Next, find the actual values of $\mu_i$ - the spectrum of the system matrix $C$. If all of the $\mu_i$ are less than $\mu_{cr}$, then the system is stable. If one or more of $\mu_i$ is greater than $\mu_{cr}$, then the system is unstable, and the corresponding eigenvectors $\pmb{\psi}_i$ give the structure for the corresponding unstable clusters. The high magnitudes of $\pmb{\psi}_i$ correspond to critical droop or line parameters, as it is shown in Section \ref{sec:sensitivity}.

The output of the proposed procedure will be a set of two-bus equivalent systems that are arranged in the order of decreasing $\mu$. Those corresponding to higher $\mu$ will be the ones with the least stability margin. Besides, Fig. \ref{fig:kundur_decoupling} illustrates the two-bus equivalent clusters on the example of the two-area system.  
The next section demonstrates the significant elements of the corresponding eigenvector $\pmb{\psi}$ suggest the parameters of the inverters and lines in these clusters that should be changed to stabilize the system or increase its stability margin.

\begin{figure}
    \centering
    \includegraphics[width=0.4\textwidth]{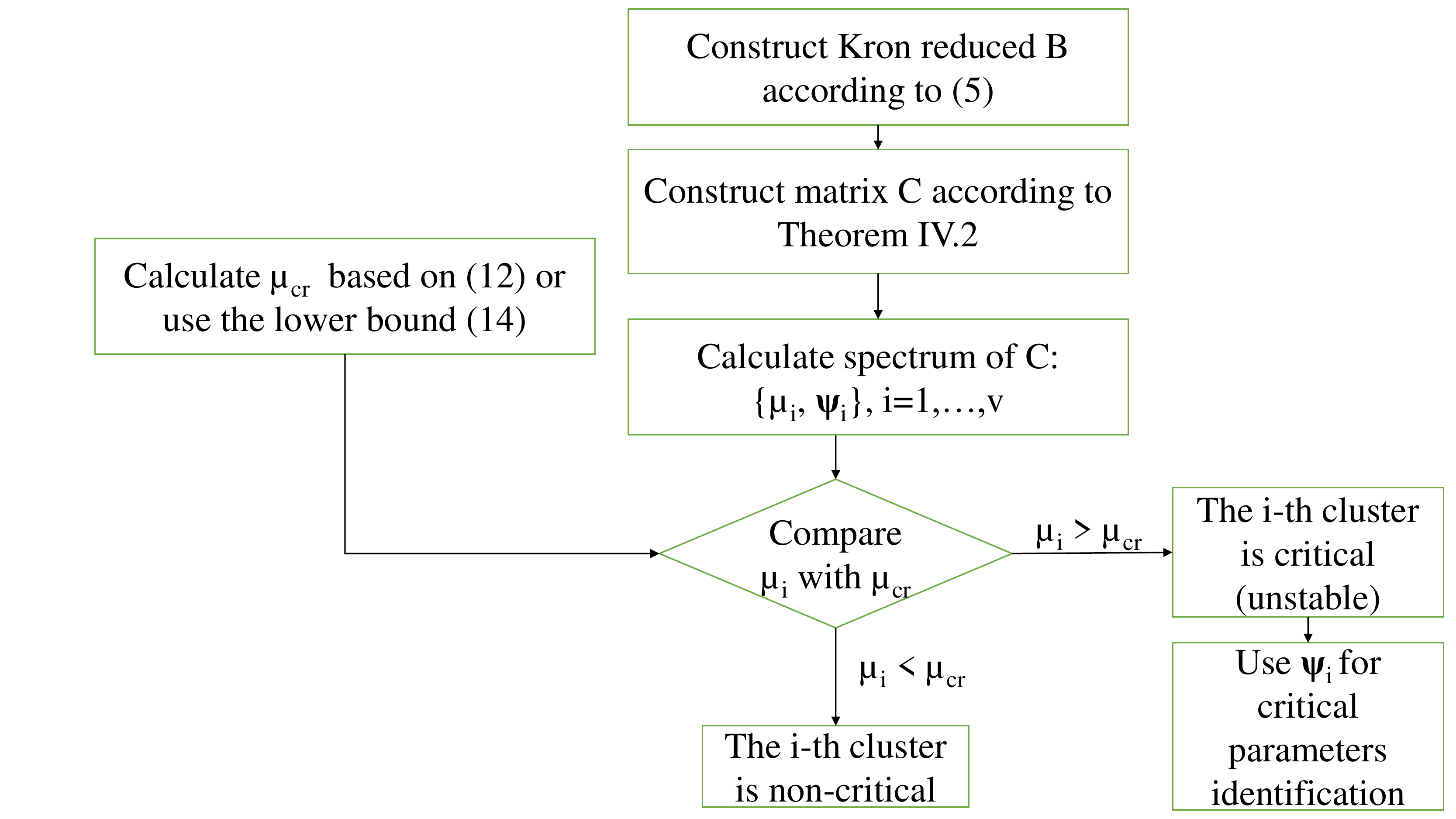}
    \caption{Flowchart to identify the critical clusters in an inverter-based system.}
    \label{fig:critical_cluster_flowchart}
\end{figure}


\section{Stability Margins and Sensitivity Αnalysis} \label{sec:sensitivity}

\subsection{Properties of the $C$ spectrum}

\begin{figure}
    \centering
    \includegraphics[width=0.3\textwidth]{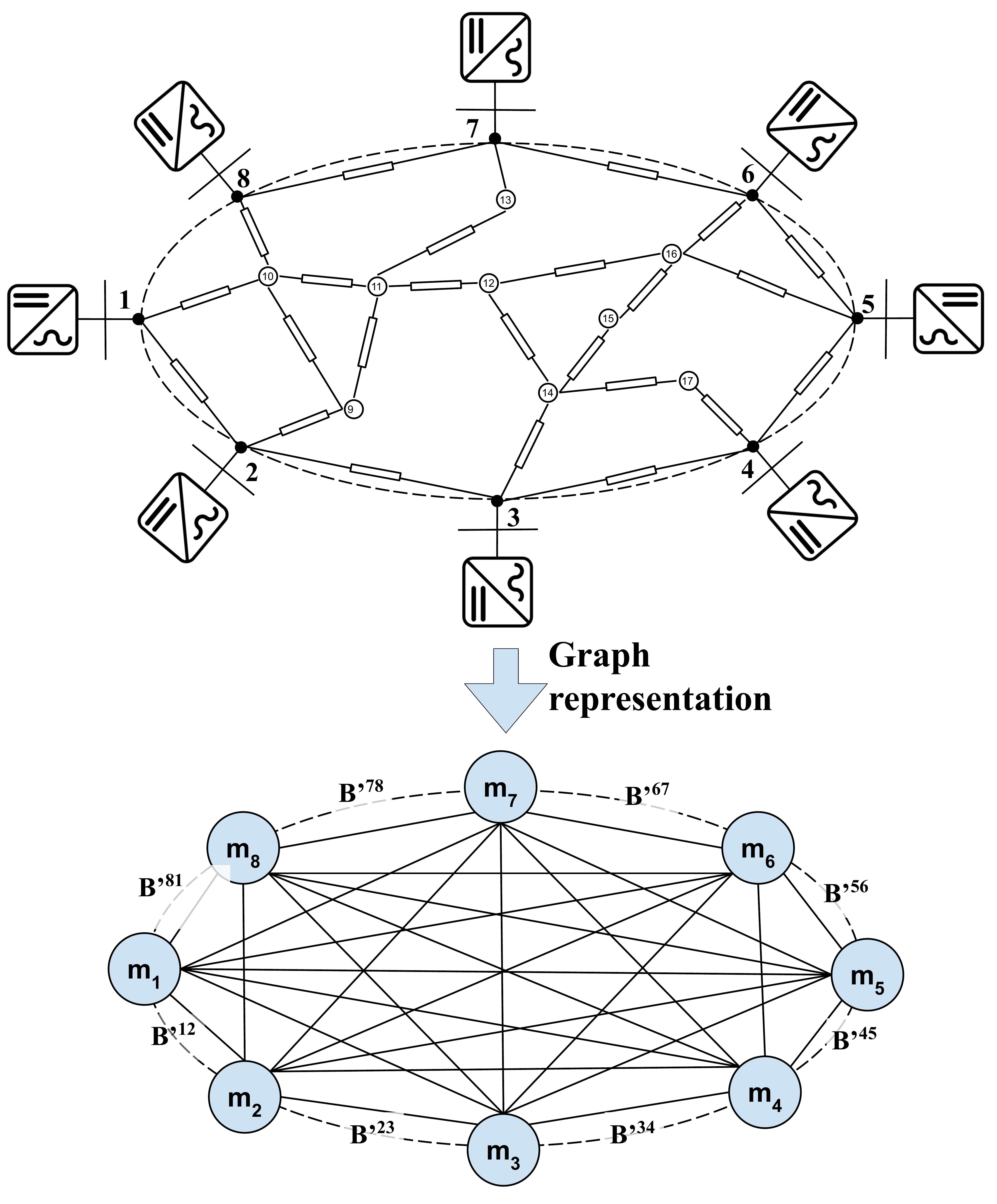}
    \caption{Representation of an inverter-based microgrid by a graph with $m_i$ as node weights and $\mathcal{B}^e$ as branch weights (labels for some lines are omitted).}
    \label{fig:graph_representation}
\end{figure}

Matrix $C$ could be considered as the generalized Laplacian matrix \cite{godsil2001laplacian} for the network graph augmented by node weights equal to droop gains $m_i, \ i=1,\cdots,l$ as shown by Fig. \ref{fig:graph_representation} with real non-negative eigenvalues. Although $C$ is not symmetric, it can be made symmetric by a similarity transform  $M^{-1/2}C M^{1/2} = M^{1/2} \mathcal{B} M^{1/2}$ (since matrix $M$ is diagonal, calculating $M^{1/2}$ is trivial), which proves that the spectrum of $C$ is real. Moreover, both $M^{1/2}$ and $\mathcal{B}$ matrices are positive (semi)definite, so is the product $M^{1/2} \mathcal{B} M^{1/2}$, which proves that the spectrum of $C$ is non-negative. As the matrix $C$ involves dimensionless values of $1/X^{ij}$ (in p.u.) and $m_k$ (in \%), its eigenvalues $\mu_i$ are also dimensionless.  

Having proved that the spectrum of $C$ is non-negative, we can formulate the following theorem, which establishes important stability properties of inverter-based microgrids:
\begin{theorem}\label{theorem_addition_line}
    The addition of any new line, as well as the increase of any existing line susceptance $\mathcal{B}^e = \frac{1}{X^e}$, or the increase of any inverter's droop gain $m_k, k = 1, \cdots, v$ in the system could only increase eigenvalues $\mu_i, \ i=1,\cdots, v$.
\end{theorem}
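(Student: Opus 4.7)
My plan is to reduce all three statements to a common monotonicity principle for a Rayleigh/min--max characterization of the $\mu_i$. Since the spectrum of $C = M\mathcal{B}$ coincides with that of the symmetric positive semidefinite matrix $\tilde{C} = M^{1/2}\mathcal{B} M^{1/2}$ (as established just above the theorem), and since $M$ is a positive definite diagonal matrix, the $\mu_i$ can equivalently be viewed as the solutions of the generalized eigenvalue problem $\mathcal{B}\,x = \mu M^{-1} x$. By Courant--Fischer applied to this symmetric pencil with $M^{-1} \succ 0$, and with $\mu_1 \geq \mu_2 \geq \cdots \geq \mu_v$ arranged in decreasing order, I have
\begin{equation*}
\mu_i \;=\; \max_{\substack{S \subseteq \mathbb{R}^{v} \\ \dim S = i}} \; \min_{\substack{x \in S \\ x \neq 0}} \; \frac{x^{\top} \mathcal{B}\, x}{x^{\top} M^{-1} x}.
\end{equation*}
Monotonicity of each $\mu_i$ under the three perturbations will then follow by inspection of this quotient.

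For the two network-side operations (adding a new line, or increasing an existing line susceptance $\mathcal{B}^e = 1/X^e$), the susceptance matrix changes by $\Delta \mathcal{B} = (1+\rho^2)\,\Delta\mathcal{B}^e\,(e_i - e_j)(e_i - e_j)^{\top}$, which is a rank-one positive semidefinite update; hence $\mathcal{B}' \succeq \mathcal{B}$ and $x^{\top} \mathcal{B}' x \geq x^{\top} \mathcal{B}\, x$ for every $x$, while the denominator $x^{\top} M^{-1} x$ is unchanged. The inner minimum over each candidate subspace can only grow, so the outer maximum yields $\mu_i' \geq \mu_i$. For the droop-side operation (increasing some $m_k$), the numerator is unchanged while $M$ grows in the PSD order, so $M^{-1}$ shrinks and $x^{\top} (M')^{-1} x \leq x^{\top} M^{-1} x$ for every $x$; again the ratio only increases and min--max gives $\mu_i' \geq \mu_i$. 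Thus all three cases collapse to the same argument.

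The only delicate point I foresee is justifying the variational formula in the presence of the null space of $\mathcal{B}$ (spanned by $\mathbf{1}$, inherited from the Laplacian structure of the network). Because $M^{-1}$ is strictly positive definite the denominator never vanishes, so the Courant--Fischer characterization applies verbatim to the pencil $(\mathcal{B}, M^{-1})$; the zero eigenvalue simply persists on both sides of the perturbation and the inequality $\mu_i' \geq \mu_i$ holds for every index without any further case analysis. An alternative route for the $\mathcal{B}$-monotonicity cases would be a direct appeal to Weyl's monotonicity theorem applied to $\tilde C$, but the Rayleigh-quotient route handles the droop perturbation and the network perturbations in one stroke, which is why I favor it.
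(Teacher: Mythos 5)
Your proof is correct, but it follows a genuinely different route from the paper's. The paper splits the theorem into two separate arguments with two separate tools: for a line addition or a susceptance increase it writes $\hat{C}=C+C_e$ with $C_e$ the generalized Laplacian of the single affected edge and invokes the additive Weyl inequality to get the two-sided bound $\mu_i \leq \hat{\mu}_i \leq \mu_i + (m_1+m_2)/X^{1,2}$; for a droop increase it writes $\tilde{C}=DC$ with $D=\mathrm{diag}(1,\dots,d_k,\dots,1)$ and invokes a multiplicative Weyl inequality to get $\mu_i \leq \tilde{\mu}_i \leq d_k\mu_i$. You instead recast everything as the symmetric definite pencil $(\mathcal{B}, M^{-1})$ and apply Courant--Fischer to the generalized Rayleigh quotient $x^{\top}\mathcal{B}x / x^{\top}M^{-1}x$, observing that the network perturbations raise the numerator (PSD rank-one update) while the droop perturbation lowers the denominator. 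Your reduction of the pencil to $M^{1/2}\mathcal{B}M^{1/2}$, the identification of the rank-one edge update, and the handling of the kernel of $\mathcal{B}$ are all sound, and the nonnegativity of the numerator needed for the denominator argument is exactly the positive semidefiniteness of $\mathcal{B}$ established just before the theorem. What each approach buys: yours is more economical, since a single variational principle covers all three perturbations at once, whereas the paper's Weyl-based route needs two distinct inequalities but returns something slightly stronger as a by-product, namely explicit upper bounds on how much each $\mu_i$ can grow ($ (m_1+m_2)/X^{1,2}$ additively, or the factor $d_k$ multiplicatively), which quantify the perturbation rather than merely establishing monotonicity.
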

\begin{proof}
 To check this fact for the line addition, let us consider without loss of generality a new line connection between first and second node, $e = (1,2)$. The new $\hat{C} = C + C_e$ is decomposed into sum of the original $C$ for the graph without added line and $C_e$ is the generalized Laplacian of the graph on $v$ vertices consisting of just the edge $e = (1,2)$ with $X^{1,2}$,

\begin{equation} \label{eq:Laplacian_edge}
    C_e = \frac{1}{X^{1,2}}\begin{pmatrix}m_1 & -m_1 & 0 & \cdots & 0 \\ -m_2 & m_2 & 0 & \cdots & 0 \\ 0 & 0 & 0 & \cdots & 0  \\ \vdots & \vdots  & \vdots  & \ddots & \vdots \\ 0 & 0 & 0 & \cdots & 0
    \end{pmatrix} \ .
\end{equation}
Then one can use Weyl's inequality for eigenvalues of the sum $\hat{C}$ of matrices $C$ and $C_e$ \cite[Theorem~1.3]{so1994commutativity}:
\begin{equation}
    \mu_i \leq \hat{\mu_i} \leq \mu_i + \frac{m_1 + m_2}{X^{1,2}} \ , i=1,\cdots,v \ ,
\end{equation}
where $\hat{\mu_i}$ are eignevalues for $\hat{C}$. The given argument could be applied to any line $e = (i,j)$ addition.
Also, the change of line $e = (i,j)$ susceptance by $\Delta X^e$ gives analogous to \eqref{eq:Laplacian_edge} $C_e$. Hence, the argument extends to the change of the existing line parameters.  

Now, we show the $\mu_i$ increase with the droop gains, $m_k$, increase. Let us represent the increase of $m_k$ by its multiplication by some value $d_k>1$. In this case, we relate original $C$ with $\tilde{C}$ for the increased droop $d_k m_k$ as $\tilde{C} = D C$, where $D = \text{diag}(1, \cdots, d_k, \cdots, 1)$. Next, we use a multiplicative version of the Weyl's inequality \cite[Theorem~4.1]{so1994commutativity}:
    \begin{equation}
        \mu_i \leq \tilde{\mu}_i \leq  d_k\mu_i, \ i=1,\cdots, v \ ,
    \end{equation}
    where $\tilde{\mu}_i$ are eigenvalues of $\tilde{C}$.
\end{proof}

Therefore, Theorem \ref{theorem_addition_line} suggests that the addition of a new line or increase of the susceptance of any existing line makes a microgrid less stable. The same is also true for an increase in any droop gain. The property is entirely consistent with the previous results in \cite{Vorobev2016}, \cite{vorobev2018towards}. This property is a distinctive feature of microgrids in comparison with conventional power systems.

\subsection{Sensitivity Analysis}

Here, the incremental effect of a change in the length $l^{ij}$ of the line $(ij)$ or the droop gains $m_i$ of inverter $i$ on the values of $\mu_i$ is investigated. First, let us consider the sensitivity of the given $\mu_k$ (further in this section, the subscript $k$ is omitted, $\mu = \mu_k$) to the line length $l^{ij}$ using the participation factor analysis as follows:

\begin{eqnarray} \label{eq:sensitivity_length}
        \frac{\partial \mu }{\partial l^{ij}} = \frac{\pmb{\phi}^T \frac{\partial C}{\partial l^{ij}} \pmb{\psi}}{\pmb{\phi}^T \pmb{\psi}} =
         - \frac{([\psi ]_i - [\psi ]_j)^2}{x^{ij} (l^{ij})^2 \sum_{k=1}^v \frac{[\psi]_k^2}{m_k}} \ ,
\end{eqnarray}
where $\pmb{\phi} $ is the left eigenvector of $C$, and $x^{ij}$ is the per-unit length reactance. To obtain the final expression in \eqref{eq:sensitivity_length}, we notice that $\pmb{\phi}  = M^{-1} \pmb{\psi} $. One observes that the sensitivity \eqref{eq:sensitivity_length} is proportional to the eigenvector elements $[\psi ]_i, [\psi ]_j$. Therefore, smaller values of $[\psi ]_i, [\psi ]_j$ will have smaller influence on $\mu $. That says line $l^{ij}$ with insignificant $[\psi ]_i, [\psi ]_j$ should not be included in the cluster, and this is why the suggested identification procedure in Fig. \ref{fig:critical_cluster_flowchart} is based on the elements of eigenvectors $\pmb{\psi}$. 

\begin{figure}
    \centering
    \includegraphics[width=0.4\textwidth]{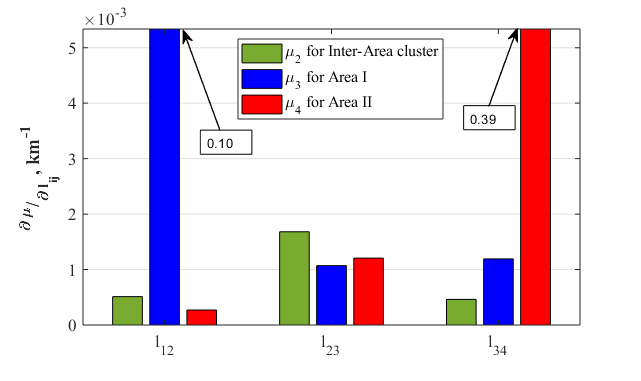}
    \caption{Sensitivities of $\mu$ to the changes in the length of each line $l^{ij}$ according to \eqref{eq:sensitivity_length}. The sensitivities $\frac{\partial \mu_3}{\partial l_{12}} = 0.1$ and $\frac{\partial \mu_4}{\partial l_{34}}=0.39$ are two orders of magnitudes greater than others and exceed the y-axis range. For clarity, the exact values are included.}
    \label{fig:sensit_lengths}
\end{figure}

Secondly, consider the sensitivity to the changes in droop gains $m_k$ that is obtained similarly to \eqref{eq:sensitivity_length}, as follows,

\begin{equation} \label{eq:sensitivity_droop}
    \frac{\partial \mu }{\partial m_k} = \frac{[\psi ]_k}{\sum_{i=1}^v \frac{[\psi]_i^2}{m_i}} \sum_{j: j\neq k} \frac{1}{X^{kj}} ([\psi ]_k - [\psi ]_j) \ .
\end{equation}

The sensitivity to $m_k$ is proportional to $[\psi ]_k$. Therefore, to have a significant response of $\mu $ to changes in $m_k$, the corresponding value of $[\psi ]_k$ should be significant. In other words, inverter $k$ should be a part of the cluster, associated with $\mu$. 

We note, though, that significant values for eigenvector components $[\psi ]_i$ and $[\psi ]_j$ in \eqref{eq:sensitivity_length} do not always lead to high sensitivity of the corresponding $\mu$ to $l_{ij}$, due to $[\psi ]_i - [\psi ]_j$ in the right-hand part of \eqref{eq:sensitivity_length}. Such a case, for example, corresponds to the sensitivity of the $\mu$ eigenvalue for interarea mode, where all the eigenvector elements are considerable (green on Fig.\ref{fig:mode_shapes}), but the sensitivity of $\mu$ to $l_{12}$ and $l_{34}$ is small (Fig.\ref{fig:sensit_lengths}). 


\section{Numerical Validation}\label{sec:numerical}

In this section, numerical validations of the proposed procedure are presented using the $4$-bus, two-area system shown in Fig. \ref{fig:kundur_mmg} (Kundur system). In addition, system parameters are given in Table \ref{tab:parameters}, and they serve as the base-case. Following the procedure for critical cluster identification in Fig. \ref{fig:critical_cluster_flowchart}, $\mu_{cr}=1.97$ for the chosen values of $\rho=1.4$ and $k=3.0$. Next, eigenvalues a of the weighted Laplacian matrix $C$  are calculated as $\mu_1=0$, $\mu_2=0.93$, $\mu_3=1.05$, and $\mu_4=2.06$ (sorted in the increasing order). According to our results, the system is unstable because $\mu_4 > \mu_{cr}$, and by exploring the corresponding eigenvector $\pmb{\phi}_4=[-0.02, 0.06, -0.73, 0.69]^T$, one concludes, that inverters $3$ and $4$ bring the most contribution to the unstable cluster. To find the parameters that have the most effect on the corresponding eigenvalue, formulas \eqref{eq:sensitivity_length} and \eqref{eq:sensitivity_length}  are applied, giving out the droop gains of inverters $3$ and $4$, $m_3$ and $m_4$, and the length of the line $3-4$, $l_{34}$. Further, we vary different system parameters and directly verify the corresponding changes in eigenvalues $\mu$, with simultaneous direct dynamic modelling of the system to show its stability/instability.

\begin{figure}
    \centering
    \includegraphics[width=0.4\textwidth]{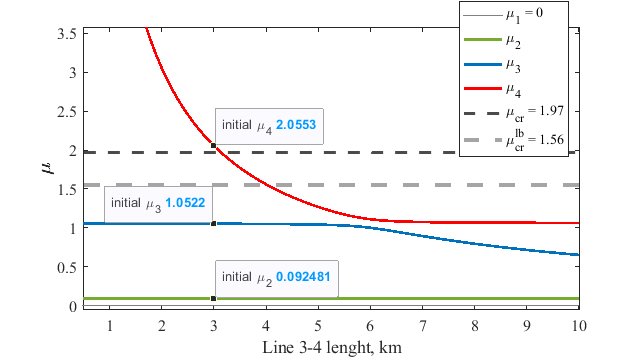}
    \caption{Spectrum $\mu_i$ versus the length of line $3-4$, $l_{34}$, in Area II. 
    The increase in $l_{34}$ stabilizes the system.}
    \label{fig:mu_l34}
\end{figure}

\begin{figure}
    \centering
    \includegraphics[width=0.4\textwidth]{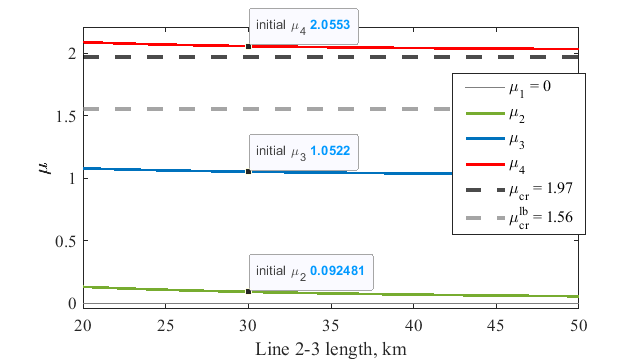}
    \caption{Spectrum $\mu_i$ versus the length of line $2-3$, $l_{23}$, interconnecting two areas. 
    Notice $\mu_4$ remains above the threshold $\mu_{cr}$, that says the system remains unstable even with very long line $l_{23} = 50~km$.}
    \label{fig:mu_l23}
\end{figure}

Fig. \ref{fig:mu_l34} and \ref{fig:mu_l23} show the influence of $l_{34}$ and $l_{23}$ respectively on the eigenvalues of the system matrix $C$ (both line lengths were varied around their initial values). As expected, the biggest eigenvalue is highly sensitive to the variation of $l_{34}$ and almost insensitive to the variation in $l_{23}$. This means, that it is impossible to stabilize the system by varying the length of the line $2-3$ from its base value. 
Variation of $l_{34}$ on the contrary, can stabilise the system, and according to Fig. \ref{fig:mu_l34} the value $l_{34}=3.3~km$ or greater corresponds to stable system. Time-domain simulations shown in Fig. \ref{fig:step_response_all_cases} also justify this observation -- the system becomes stable when subjected to a load change for $l_{34} = 4~km$ and remains unstable with $l_{23} = 50~km$.

Note that changing the line lengths to ensure stabilization can assist in grid planning, but such an approach is not a practical solution for system operation. Therefore, a more feasible way to restore stabilization is through varying the droop gains. Fig. \ref{fig:m1} shows the dependence of eigenvalues of matrix $C$ on $m_1$ the droop gain of inverter $1$. We see that even for minimal values of $m_1$, one of the eigenvalues is above the critical value, hence the system is unstable. When $m_1$ becomes larger, the second eigenvalue crosses the critical level, and the second unstable cluster appears (inverters $1$ and $2$). Anyway, it is impossible to stabilize the initial system by changing $m_1$ only. Fig. \ref{fig:m3} shows the dependence of eigenvalues $\mu$ on the droop gain of inverter $3$ - $m_3$. We see that whenever $m_3 <~2.7\%$, the system is stable and unstable otherwise. Therefore, the initial system can be stabilized by reducing the droop gain $m_3$. Again, the step response in Fig. \ref{fig:step_response_all_cases} illustrates the stability restoration with a smaller $m_3 = 1\%$ and the lack of stability with smaller $m_1 = 1\%$.

\begin{figure}
    \centering
    \includegraphics[width=0.4\textwidth]{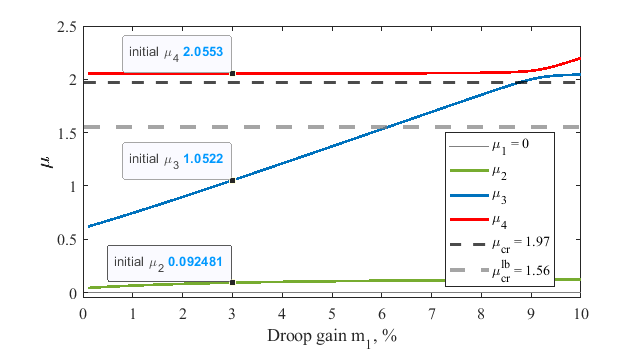}
    \caption{Spectrum $\mu_i$ versus $m_1$ - frequency droop gain of inverter $1$. Eigenvalue, corresponding to Area II cluster ($\mu_4$), remains above the threshold $\mu_{cr}$ even for small $m_1 < 1\%$. Moreover, both Area I and Area II become unstable when $m_1 > 9\%$.}
    \label{fig:m1}
\end{figure}

\begin{figure}
    \centering
    \includegraphics[width=0.4\textwidth]{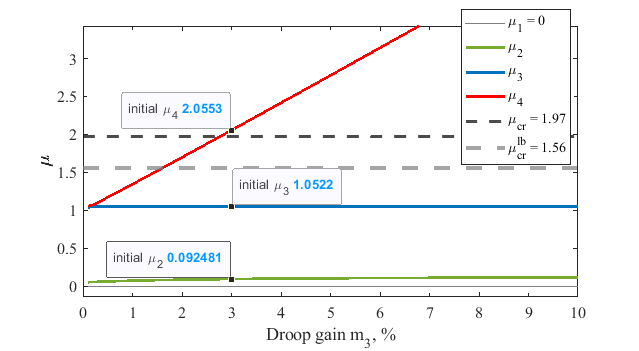}
    \caption{Spectrum $\mu_i$ versus $m_3$ - frequency droop gain of inverter $1$. The system stabilizes with the $m_3$ decrease.}
    \label{fig:m3}
\end{figure}

\begin{figure}
    \centering
    \includegraphics[width=0.4\textwidth]{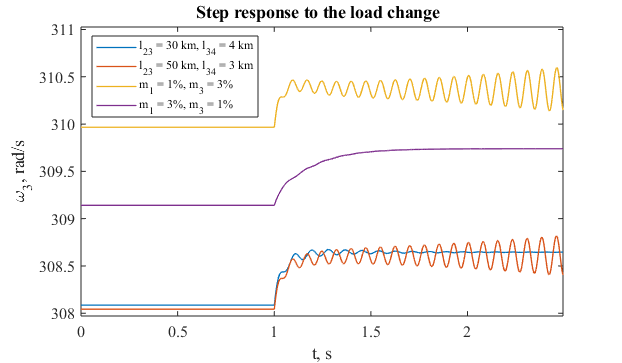}
    \caption{Step response of $\omega_3$ subjected to a 10\% load decrease under different system parameters. The two-area system remains unstable with a large $l_{23} = 50~km$ and a reduced $m_1 = 1\%$, but it stabilizes for a higher $l_{34} = 4~km$ and a reduced $m_3 = 1\%$.}
    \label{fig:step_response_all_cases}
\end{figure}

\section{Conclusion}
In this paper, a novel technique for identification and stabilization of critical clusters in inverter-based microgrids was proposed. The main idea is to decompose a system into a set of clusters by exploiting the properties of the spectrum of the weighted network admittance matrix $C$. This method also allows sorting clusters according to their distances to the stability boundary. The derived theory established a formal basis for the concept of \emph{critical cluster} that is currently missing in the literature.  

We showed that for critical clusters, stability is primarily affected by the length of lines (or impedance) and droop gains. The former provides useful insights for grid planning studies, while the latter benefits system operation. We explicitly demonstrated that to stabilize a microgrid, one should mostly target parameters of the critical clusters - droop coefficients of the corresponding inverters and/or lines connecting them. We also showed that tuning inverters associated with non-critical clusters are futile in regaining network stability. Thus, the paper established the technical principles to explain the oscillatory interactions among inverters and designed an effective procedure to enhance/restore system stability most efficiently.

\begin{table}[ht]
    \centering
    \caption{Parameters of the two-area (four-inverter) system}
    \begin{tabular}{|c|c|c|}
        \hline
        Parameter & Description & Value  \\
        \hline
        $\omega_0$ & Nominal Frequency & $2\pi \cdot 50 \ rad/s$  \\
        $U_b$ & Base Voltage & 230 V\\
        $S_b$ & Base Inverter Rating & 10\,kVA\\
        $\rho $ & R/X ratio & 1.4 \\ 
        $k$ & Droop gain ratio & 3\\
        $R$ &  Line Resistance & $222.2 \ m\Omega/km$\\
        $L$ & Line Inductance & $0.51 \ mH/km$\\
        $m_i$ & Frequency Droop Gain & 3\% \\
        $n_i$ & Voltage Droop Gain & 1\%\\
        $l_{12}$ & Line $1-2$ length & 6 $km$ \\
        $l_{23}$ & Line $2-3$ length & 30 $km$ \\
        $l_{34}$ & Line $3-4$ length & 3 $km$ \\
        $Z_1$ & Bus 1 load & $(20+1j) \Omega$\\
        $Z_2$ & Bus 2 load & $(25+1j) \Omega$\\
        $Z_3$ & Bus 3 load & $(20+4.75j) \Omega$\\
        $Z_4$ & Bus 4 load & $(40+12.58j) \Omega$\\
        \hline
    \end{tabular}
    
    \label{tab:parameters}
\end{table}

\ifCLASSOPTIONcaptionsoff
  \newpage
\fi

\bibliographystyle{IEEEtran}
\bibliography{bibl}
\end{document}